\providecommand{\algorithmname}{Algorithm}
\numberwithin{equation}{section}
\numberwithin{figure}{section}
\theoremstyle{plain}
\newtheorem{thm}{\protect\theoremname}
  \theoremstyle{definition}
  \newtheorem{example}[thm]{\protect\examplename}
  \theoremstyle{plain}
  \newtheorem{prop}[thm]{\protect\propositionname}
\newenvironment{lyxcode}
{\par\begin{list}{}{
\setlength{\rightmargin}{\leftmargin}
\setlength{\listparindent}{0pt}% needed for AMS classes
\raggedright
\setlength{\itemsep}{0pt}
\setlength{\parsep}{0pt}
\normalfont\ttfamily}%
 \item[]}
{\end{list}}
\def\concat{{+\mkern-4mu +}} 
  \providecommand{\examplename}{Example}
  \providecommand{\propositionname}{Proposition}
\providecommand{\theoremname}{Theorem}
\title{Theory and Techniques for Synthesizing a Family of Graph Algorithms}
\author{Srinivas Nedunuri \quad \quad William R. Cook \institute{University of Texas at Austin} \email{nedunuri|wcook@cs.utexas.edu} \and   Douglas R. Smith \institute{Kestrel Institute} \email{smith@kestrel.edu}}
\begin{document}

\maketitle
\begin{comment}
TBD: 

- elim galois connection?

A Research Proposal 

- data structures for testing whether a node is dominated or dominates.
how to efficiently test for dominance. look at ibaraki. Not been pursued
much. Similar to subsumption in theorem proving. they use hashing.
Intersection of the dom reln and the search tree so far. Think about
Doug's anti-chain stuff. Roots of a DAG? Start with examples. Doug's
convex curve using this you can quickly maintain and test. Need to
understand this - draw the graph. Alg's for maintaining convex hulls
as points are added? Computational geometry any help here? Shamos.
This could provide a new paradigm. 

problem of finding max elements in a pre-order

- doug has notes in notebook on computing max elements in a set. Tarjan
alg for computing dominators 

problem is that Shortest path cannot be done in the way you're specifying
it.B/c you cannot limit to only comparing .. Best to say that its
a q. and whatever the q whether its priority like BeFS or whether
its linear ordering as in BFS. Priority Q is the most abstract form.
As you move down the hierarch or for impl. you might want to refine
that into more specific kinds of qs
\end{comment}

\begin{abstract}
Although Breadth-First Search (BFS) has several advantages over Depth-First
Search (DFS) its prohibitive space requirements have meant that algorithm
designers often pass it over in favor of DFS. To address this shortcoming,
we introduce a theory of Efficient BFS (EBFS) along with a simple
recursive program schema for carrying out the search. The theory is
based on dominance relations, a long standing technique from the field
of search algorithms. We show how the theory can be used to systematically
derive solutions to two graph algorithms, namely the Single Source
Shortest Path problem and the Minimum Spanning Tree problem. The solutions
are found by making small systematic changes to the derivation, revealing
the connections between the two problems which are often obscured
in textbook presentations of them. 
\end{abstract}

\section{Introduction}

Program synthesis is experiencing something of a resurgence \cite{SGF10,STBSS06,GJTV11}
\cite{PBS11,VY08} following negative perceptions of its scalability
in the early 90s. Many of the current approaches aim for near-automated
synthesis. In contrast, the approach we follow, we call \emph{guided
program synthesis}, also incorporates a high degree of automation
but is more user-guided. The basic idea is to identify interesting
classes of algorithms and capture as much \emph{generic} algorithm
design knowledge as possible in one place.The user instantiates that
knowledge with problem-specific \emph{domain} information. This step
is often carried out with machine assistance. The approach has been
applied to successfully derive scores of efficient algorithms for
a wide range of practical problems including scheduling \cite{SPW95},
concurrent garbage collection \cite{PPS10}, and SAT solvers \cite{SW08}.

One significant class of algorithms that has been investigated is
search algorithms. Many interesting problems%
\begin{comment}
\footnote{In fact any computable function%
}
\end{comment}
{} can be solved by application of search. In such an approach, an initial
search space is partitioned into subspaces, a process called \emph{splitting,}
which continues recursively until a \emph{feasible} solution is found.
A feasible solution is one that satisfies the given problem specification.
Viewed as a search tree, spaces form nodes, and the subspaces after
a split form the children of that node. %
\begin{comment}
The theme underlying much of the work in my forthcoming dissertation
is how to exploit the power and utility of dominance relations. 
\end{comment}
The process%
\begin{comment}
 can be viewed as an enhanced form of Branch and Bound, called \emph{global
search, }and 
\end{comment}
{} has been formalized by Smith \cite{Smi88,Smi10}. Problems which
can be solved by global search are said to be in the Global Search
(GS) class. The enhancements in GS over standard branch-and-bound
include a number of techniques designed to improve the quality of
the search by eliminating unpromising avenues. One such technique
is referred to as \emph{dominance relations}. Although they do not
appear to have been widely used, the idea of dominance relations goes
back to at least the 70s \cite{Iba77}. Essentially, a dominance relation
is a relation between two nodes in the search tree such that if one
dominates the other, then the dominated node is guaranteed to lead
to a worse solution than the dominating one, and can therefore be
discarded. Establishing a dominance relation for a given problem is
carried out by a user. However this process is not always obvious.
There are also a variety of ways in which to carry out the search,
for example Depth-First (DFS), Breadth-First (BFS), Best-First, etc.
Although DFS is the most common, BFS actually has several advantages
over DFS were it not for its exponential space requirement. The key
to carrying out BFS space-efficiently is to limit the size of the
frontier at any level. However, this has not been investigated in
any systematic manner up to now.

This paper has two main contributions:
\begin{itemize}
\item We show how to limit the size of the frontier in search using dominance
relations, thereby enabling space-efficient BFS. From this formal
characterization, we derive a characteristic recurrence that serves
as the basis of a program schema for implementing Global Search. Additionally,
we show that limiting the size of the undominated frontier to one
results in a useful class of \emph{greedy} algorithms.
\item We show how to derive dominance relations and demonstrate they satisfy
the greediness conditions for two graph problems, namely Single Source
Shortest Path and Minimum Spanning Tree by a systematic process, which
though not automatic, we believe has the potential to be automated.
\end{itemize}

\section{Background To Guided Program Synthesis}

\subsection{\label{sub:Process-1}Process}

The basic steps in guided program synthesis are:
\begin{enumerate}
\item Start with a logical specification of the problem to be solved. A
specification is a quadruple $\langle D,R,o,c\rangle$ where $D$
is an input type, $R$ an output or result type, $o:D\times R$ is
a predicate relating correct or feasible outputs to inputs, and $c:D\times R\rightarrow Int$
is a cost function on solutions. An example specification is in Eg.
\ref{Spec-of-SP} (This specification is explained in more detail
below)
\item Pick an algorithm class from a library of algorithm classes (\textsc{Global
Search, Local Search, Divide and Conquer, Fixpoint Iteration}, etc).
An algorithm class comprises a \emph{program schema} containing operators
to be instantiated and an \emph{axiomatic theory} of those operators
(see \cite{Ned12} for details). A schema is analogous to a template
in Java/C++ %
\begin{comment}
or indeed in Template Driven synthesis \cite{SGF10}
\end{comment}
, with the difference that both the template and template arguments
are formally constrained.
\item Instantiate the operators of the program schema using information
about the problem domain and in accordance with the axioms of the
class theory. To ensure correctness, this step can be carried out
with mechanical assistance. %
\begin{comment}
For example, in \textsc{Global Search}, the operators carry out such
operations as partitioning a search space, propagation, pruning, dominance
testing, learning, and solution extraction.
\end{comment}
The result is an efficient algorithm for solving the given problem. 
\item Apply low-level program transforms such as finite differencing, context-dependent
simplification, and partial evaluation, followed by code generation.
Many of these are automatically applied by Specware \cite{S}, a formal
program development environment.
\end{enumerate}
The result of Step 4 is an efficient program for solving the problem
which is guaranteed correct by construction. The power of the approach
stems from the fact that the common structure of many algorithms is
contained in \emph{one} reusable program schema and associated theory.
Of course the program schema needs to be carefully designed, but that
is done once by the library designer. The focus of this %
\begin{comment}
dissertation
\end{comment}
{} paper is the \textsc{Global Search} class, and specifically on how
to methodically carry out Step 3 for a wide variety of problems. Details
of the other algorithm classes and steps are available elsewhere \cite{Kre98,Smi88,PPS10}.
\begin{example}
\label{Spec-of-SP}Specification of the Single Pair Shortest Path
(SPSP) problem is shown in Fig. \ref{fig:Spec-of-SP-1} (The $\mapsto$
reads as ``instantiates to'')%
\begin{comment}
\[
\begin{array}{rcl}
D & \mapsto & \langle s:Node,end:Node,edges:\{Edge\}\rangle\\
 &  & Edge=\langle f:Node,t:Node,cost:Nat\rangle\\
R & \mapsto & [Edge]\\
o & \mapsto & \lambda(x,z)\cdot\, path?(z,x.s,x.e)\\
 &  & path?(p,s,f)=...\\
c & \mapsto & \lambda(x,z)\,\cdot\sum_{edge\in z}edge.cost
\end{array}
\]
\end{comment}
{} The input $D$ is a structure with 3 fields, namely a start node,
end node and a set of edges. The result $R$ is a sequence of edges
($[\ldots]$ notation). A correct result is one that satisfies the
predicate $path?$ which checks that a path $z$ must be a contiguous
path from the start node to the end node ( simple recursive definition
not shown). Finally the cost of a solution is the sum of the costs
of the edges in that solution. Note that fields of a structure are
accessed using the '.' notation. 
\end{example}
\begin{wrapfigure}{o}{0.5\columnwidth}%
$\begin{array}{rcl}
D & \mapsto & \langle s:Node,e:Node,edges:\{Edge\}\rangle\\
 &  & Edge=\langle f:Node,t:Node,w:Nat\rangle\\
R & \mapsto & [Edge]\\
o & \mapsto & \lambda(x,z)\cdot\, path?(z,x.s,x.e)\\
 &  & path?(p,s,f)=...\\
c & \mapsto & \lambda(x,z)\,\cdot\sum_{edge\in z}edge.w
\end{array}$

\caption{\label{fig:Spec-of-SP-1}Specification of Shortest Path problem}
\end{wrapfigure}%

\subsection{Global Search}

\begin{comment}
DIS: Many interesting problems %
\footnote{In fact any computable function%
} can be solved by application of search. In such an approach, an initial
search space is partitioned into subspaces, a process called \emph{splitting,}
which continues recursively until a \emph{feasible} solution is found.
A feasible solution is one that satisfies the given problem specification.
Viewed as a search tree, spaces form nodes, and the subspaces after
a split the children of that node. The theme underlying much of the
work in my forthcoming dissertation is how to exploit the power and
utility of dominance relations. The process can be viewed as an enhanced
form of Branch and Bound, called \emph{global search, }and has been
formalized by Smith \cite{Smi88,Smi10}. Problems which can be solved
by global search are said to be in the Global Search class. 
\end{comment}

Before delving into a program schema for Global Search, it helps to
understand the structures over which the program schema operates.
In \cite{Smi88}, a \emph{search space} is represented by a descriptor
of some type $\widehat{R}$, which is an abstraction of the result
type $R$. %
\begin{comment}
and related to it by a Galois Connection .
\end{comment}
{} %
\begin{comment}
 DIS: $R$ and $\widehat{R}$ are formally related by a \emph{galois
connection}. That is, there are two monotone functions, one $\alpha:2^{R}\rightarrow\widehat{R}$
which maps a set of solutions to its corresponding descriptor and
another $\gamma:\widehat{R}\rightarrow2^{R}$ which maps a descriptor
down to a canonical set of solutions that it represents. $\widehat{R}$
is required to be a bounded semi-lattice and $\alpha$ and $\gamma$
are related by $\alpha(zs)\sqsubseteq y\Leftrightarrow zs\subseteq\gamma(y)$.
The galois connection allows properties derived in $\widehat{R}$
to be mapped down to concrete results in $R$. 
\end{comment}
The initial or starting space is denoted $\bot$. There are also two
predicates \emph{split$:D\times\widehat{R}\times\widehat{R}$,} written
$\pitchfork$, and \emph{extract}$:\widehat{R}\times R$, written
$\chi$. Split defines when a space is a subspace of another space,
and extract captures when a solution is extractable from a space%
\begin{comment}
and is defined as $\chi(y,z)\equiv y=\alpha(\{z\})$
\end{comment}
. We say a solution $z$ is \emph{contained} in a space $y$ (written
$z\in y$) if it can be extracted after a finite number of splits.
A feasible space is one that contains feasible solutions. We often
write $\pitchfork(x,y,y')$ as $y\pitchfork_{x}y'$ for readability,
and even drop the subscript when there is no confusion. \emph{Global
Search theory (GS-theory)} \cite{Smi88} axiomatically characterizes
the relation between the predicates $\bot$, $\pitchfork$ and $\chi$,
as well as ensuring that the associated program schema computes a
result that satisfies the specification. In the sequel, the symbols
$\widehat{R},\bot,\pitchfork,\chi,\oplus$ are all assumed to be drawn
from GS-theory. A theory for a given problem is created by instantiating
these terms, as shown in the next example. 
\begin{wrapfigure}{l}{0.5\columnwidth}%
$\begin{array}{rcl}
\widehat{R} & \mapsto & R\\
\bot & \mapsto & \lambda x\cdot\,[]\\
\pitchfork & \mapsto & \lambda(x,p,p')\cdot\,\exists e\in x.edges\cdot\\
 & & \,\,\,\,\,\,\,\,\,\,\,\,\,\,\,\,\,\,\,\,\,\,\,\,\,\,\,\,\,\,\,\,\,\,\,\,\,\,\,\,\,\, p'=p\concat[e]\\
\chi & \mapsto & \lambda(z,p)\cdot\, p=z
\end{array}$\caption{\label{fig:GS-for-SP-1}GS instantiation for Single Pair Shortest
Path}
\end{wrapfigure}%

\begin{example}
\label{GS-theory-for-SP}Instantiating GS-theory for the Single Pair
Shortest Path problem. The type of solution spaces $\widehat{R}$
is the same as the result type $R$ %
\footnote{there is a covariant relationship between an element of $\widehat{R}$
and of $R$. For example, the initial space, corresponding to all
possible paths, is the empty list.%
}. A space is split by adding an edge to the current path - that is
the subspaces are the different paths that result from adding an edge
to the parent path. Finally a solution can be trivially extracted
from any space by setting the result $z$ to the space $p$. This
is summarized in Fig. \ref{fig:GS-for-SP-1} ($[]$ denotes %
\begin{comment}
\footnote{The definition of $\chi$ can be inferred from the theory, but it
is included here for completeness%
}
\end{comment}
{} the empty list, and $\concat$ denotes list concatenation).
\end{example}
\begin{comment}
\[
\begin{array}{rcl}
\widehat{R} & \mapsto & R\\
\bot & \mapsto & \lambda x\cdot\,[]\\
\alpha & \mapsto & \lambda zs\cdot\,\mathit{longestCommonPrefix(zs)}\\
\gamma & \mapsto & \lambda y\cdot\,\{z\mid\exists e.\, z=y\concat e\}\\
\pitchfork & \mapsto & \lambda(x,p,pp)\cdot\,\exists e\in x.edges.\, pp=p\concat[e]\\
\chi & \mapsto & \lambda(z,p)\cdot\, p=\alpha(\{z\})
\end{array}
\]
\end{comment}

\subsection{Dominance Relations}

As mentioned in the introduction, a dominance relation provides a
way of comparing two subspaces in order to show that one will always
contain at least as good a solution as the other. (Goodness in this
case is measured by some cost function on solutions). The first space
is said to \emph{dominate} ($\vartriangleright$) the second, which
can then be eliminated from the search. Letting $c^{*}$ denote the
cost of an optimal solution in a space, this can be formalized as
(all free variables are assumed to be universally quantified):
\begin{equation}
y\vartriangleright y'\Rightarrow c^{*}(x,y)\leq c^{*}(x,y')\label{eq:dominance}
\end{equation}
Another way of expressing the consequent of (\ref{eq:dominance})
is
\begin{equation}
\forall z'\in y'\cdot\, o(x,z')\Rightarrow\exists z\in y\cdot\, o(x,z)\wedge c(x,z)\leq c(x,z')\label{eq:general-dominance}
\end{equation}

\begin{comment}
DIS: In this dissertation, we show that dominance relations deserve
a much more prominent place in algorithm design than they have been
accorded.
\end{comment}
{} To derive dominance relations, it is often useful to first derive
a semi-congruence relation \cite{Smi88}. A semi-congruence between
two partial solutions $y$ and $y'$, written $y\rightsquigarrow y'$,
ensures that any way of extending $y'$ into a feasible solution can
also be used to extend $y$ into a feasible solution. Like $\pitchfork$,
$\rightsquigarrow$ is a ternary relation over $D\times\widehat{R}\times\widehat{R}$
but as we have done with $\pitchfork$ and many other such relations
in this work, we drop the input argument when there is no confusion
and write it as a binary relation for readability. Before defining
semi-congruence, we introduce two concepts. One is the idea of \emph{useability}
of a space. A space $y$ is is useable, written $o^{*}(x,y)$, if
$\exists z.\,\chi(y,z)\wedge o(x,z)$, meaning a feasible solution
can be extracted from the space. The second is the notion of incorporating
sufficient information into a space to make it useable. This is defined
by an operator $\oplus:\widehat{R}\times t\rightarrow\widehat{R}$
that takes a space and some additional information of type $t$ and
returns a more defined space%
\begin{comment}
 (that is, $s\sqsubseteq s\oplus e$ for any $s$)
\end{comment}
. The type $t$ depends on $\widehat{R}$. For example if $\widehat{R}$
is the type of lists, then $t$ might also be the same type. Now the
formal definition of semi-congruence is:
\[
y\rightsquigarrow y'\Rightarrow o^{*}(x,y'\oplus e)\Rightarrow o^{*}(x,y\oplus e)
\]
 That is, $y\rightsquigarrow y'$ is a sufficient condition for ensuring
that if $y'$ can be extended into a feasible solution than so can
$y$ \emph{with the same extension}. If $c$ is compositional (that
is, $c(s\oplus t)=c(s)+c(t)$) then it can be shown \cite{Ned12}
that if $y\rightsquigarrow y'$ and $y$ is cheaper than $y',$ then
$y$ \emph{dominates} $y'$ (written $y\vartriangleright y'$). Formally:
\begin{equation}
y\rightsquigarrow y'\wedge c(x,y)\leq c(x,y')\Rightarrow y\vartriangleright y'\label{eq:imm-dominance}
\end{equation}
The axioms given above extend GS-theory \cite{Smi88}.%
\begin{comment}
The addition of dominance relations extends the formal definition
of the GS-theory in given earlier into the tuple $\langle D,R,o,c,\widehat{R},\bot,\pitchfork,\chi,\vartriangleright,\oplus\rangle$.
\end{comment}

\begin{example}
\label{eg:dom-reln-for-SP-informal}Single Pair Shortest Path%
\begin{comment}
between two given nodes in a graph
\end{comment}
. If there are two paths $p$ and $p'$ leading from the start node,
if $p$ and $p'$ both terminate in the same node then $p\rightsquigarrow p'$.
The reason is that any path extension $e$ (of type $t=[Edge]$) of
$p'$ that leads to the target node is also a valid path extension
for $p$. Additionally if $p$ is shorter than $p'$ then $p$ dominates
$p'$, which can be discarded. Note that this does not imply that
$p$ leads to the target node, simply that no optimal solutions are
lost in discarding $p'$. This dominance relation is formally derived
in Eg. \ref{Dom-relation-for-SP}
\end{example}

\begin{example}
0-1 Knapsack

The 0-1 Knapsack problem is, given a set of items each of which has
a weight and utility and a knapsack that has some maximum weight capacity,
to pack the knapsack with a subset of items that maximizes utility
and does not exceed the knapsack capacity. %
\begin{comment}
A solution can be represented as a sequence of booleans, where the
$i$th element is true iff the $i$th item is selected. 
\end{comment}
Given combinations $k,k'$, if $k$ and $k'$ have both examined the
same set of items and $k$ weighs less than $k'$ then any additional
items $e$ that can be feasibly added to $k'$ can also be added to
$k$, and therefore $k\rightsquigarrow k'$. Additionally if $k$
has at least as much utility as $k'$ then $k\vartriangleright k'$.
\end{example}
\begin{comment}

\subsection*{Structural Theory vs. Program Theory}

Up to this point we have somewhat conflated two separate aspects of
global search theory. In fact, the program schema alluded to in Step
2 is part of a \emph{program} theory and the axiomatic theory is the
\emph{structural} theory. The structural theory (upon which the program
theory is parametrized) defines abstract structures which can decompose
the problem (such as search trees, state-space graphs, convex spaces,
etc). It introduces predicates over substructures and constrains them
with respect to the problem specification. Dominance relations, for
example, belong in the structural theory. The program theory on the
other hand defines functions which can be shown to correctly compute
the predicates of the algorithm theory. An example is linear programming.
The structural theory defines convex spaces and their properties,
and specifically that optimum solutions are located at vertices of
a polytope. One possible program theory formally defines the simplex
algorithm operating over the polytope. Another algorithm such as the
interior point method would have a different program theory. This
distinction will be alluded to at several points in the paper 
\end{comment}

The remaining sections %
\begin{comment}
DIS: chapters
\end{comment}
{} cover the original contributions of this paper%
\begin{comment}
DIS: dissertation
\end{comment}
.

\section{A Theory Of Efficient Breadth-First Search (EBFS)}

While search can in principle solve for any computable function, it
still leaves open the question of how to carry it out effectively.
Various search strategies have been investigated over the years; two
of the most common being Breadth-First Search (BFS) and Depth-First
Search (DFS). It is well known that BFS offers several advantages
over DFS. Unlike DFS which can get trapped in infinite paths%
\footnote{resolvable in DFS with additional programming effort%
}, BFS will always find a solution if one exists. Secondly, BFS does
not require backtracking. Third, for deeper trees, BFS will generally
find a solution at the earliest possible opportunity%
\begin{comment}
 (though not necessarily an optimal one)
\end{comment}
. However, the major drawback of BFS is its space requirement which
grows exponentially. For this reason, DFS is usually preferred over
BFS.

\begin{comment}
DIS: dissertation
\end{comment}
Our first contribution in this paper is to refine GS-theory %
\begin{comment}
structural theory of GS 
\end{comment}
{} to identify the conditions under which a BFS algorithm can operate
space-efficiently. The key is to show how the size of the undominated
frontier of the search tree can be polynomially bounded. Dominance
relations are the basis for this.

In \cite{Smi88}, the relation $\pitchfork^{l}$ for $l\geq0$ is
recursively defined as follows:
\[
\begin{array}{rcl}
y\pitchfork^{0}y' & = & (y=y')\\
y\pitchfork^{l+1}y' & = & \exists y''\cdot\, y\pitchfork y''\wedge y''\pitchfork^{l}y'
\end{array}
\]
From this the next step is to define those spaces at a given frontier
level that are not dominated. However, this requires some care because
dominance is a pre-order, that is it satisfies the reflexivity and
transitivity axioms as a partial order does, but not the anti-symmetry
axiom. That is, it is quite possible for $y$ to dominate $y'$ and
$y'$ to dominate $y$ but $y$ and $y'$ need not be equal. An example
in Shortest Path is two paths of the same length from the start node
that end at the same node. Each path dominates the other. To eliminate
such cyclic dominances, %
\begin{comment}
let the congruence class $[y]$ of a space $y$ be $\{y\mid\forall y'\in\mathit{frontier_{l(y)}}\cdot\, y\vartriangleright y'\wedge y'\vartriangleright y\}$,
where $l(y)$ is the level of $y$ in the tree, namely the set of
spaces that dominate each other.
\end{comment}
define the relation $y\thickapprox y'$ as $y\vartriangleright y'\wedge y'\vartriangleright y$.
It is not difficult to show that $\thickapprox$ is an equivalence
relation. Now let the \emph{quotient frontier} at level $l$ be the
quotient set $\mathit{frontier_{l}}/\thickapprox$ . For type consistency,
let the \emph{representative} frontier $\mathit{rfrontier_{l}}$ $ $be
the quotient frontier in which each equivalence class is replaced
by some arbitrary member of that class. The representative frontier
is the frontier in which cyclic dominances have been removed. Finally
then the \emph{undominated} frontier $\mathit{undom_{l}}$ is $\mathit{rfrontier_{l}}-\{y\mid\exists y'\in\mathit{rfrontier_{l}}\cdot\, y'\vartriangleright y\}$.

Now given a problem in the GS class, if it can be shown that $\left\Vert undom_{l}\right\Vert $
for any $l$ is polynomially bounded in the size of the input, a number
of benefits accrue: (1) BFS can be used to tractably carry out the
search, as implemented in the raw program schema of Alg. \ref{alg:SEBFS-Pgm-Schema},
(2) The raw schema of Alg. \ref{alg:SEBFS-Pgm-Schema} can be transformed
into an efficient tail recursive form, in which the entire frontier
is passed down %
\begin{comment}
$O(n^{k})$ for some constant $k$
\end{comment}
{} and (3) If additionally the tree depth can be polynomially bounded
(which typically occurs for example in \emph{constraint satisfaction
problems} or CSPs \cite{Dec03}) then, under some reasonable assumptions
about the work being done at each node, the result is a polynomial-time
algorithm for the problem.

\subsection{Program Theory}

A program theory for EBFS defines a recursive function which given
a space $y$, computes a non-trivial subset $F_{x}(y)$ of the optimal
solutions contained in $y$, where 
\[
F_{x}(y)=opt_{c}\{z\mid z\in y\wedge o(x,z)\}
\]
$opt_{c}$ is a subset of its argument that is the optimal set of
solutions (w.r.t. the cost function $c$), defined as follows:
\[
opt_{c}S=\{z\mid z\in S\wedge(\forall z'\in S\,\cdot\, c(z)\leq c(z'))\}
\]
Also let $undom(y)$ be $undom_{l(y)+1}\cap\{yy\mid y\pitchfork yy\}$
where $l(y)$ is the level of $y$ in the tree. The following proposition
defines a recurrence for computing the feasible solutions in a space:
\begin{prop}
\label{prop:G-recurrence}Let $\langle D,R,\widehat{R},o,c,\bot,\pitchfork,\chi,\vartriangleright,\oplus\rangle$
be a well-founded GS-Theory w.r.t. the subspace relation $\pitchfork$
and let $F_{x}(y)=\{z\mid z\in y\wedge o(x,z)\}$ be the set of feasible
solutions contained in $y$ and $G_{x}(y)=\{z\mid\chi(y,z)\wedge o(x,z)\}\cup\bigcup_{yy\pitchfork y}G_{x}(yy)\}$
be a recurrence. Then $G_{x}(y)=F_{x}(y)$ for any $y$\end{prop}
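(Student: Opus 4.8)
The plan is to prove $G_x(y) = F_x(y)$ by well-founded induction on the subspace relation $\pitchfork$, which is legitimate precisely because the GS-theory is assumed well-founded with respect to $\pitchfork$ --- the same hypothesis that guarantees the recurrence defining $G_x$ determines a total function in the first place. Concretely, I would order spaces so that the predecessors of $y$ are its one-step subspaces $\{yy \mid y \pitchfork yy\}$ (I read the union in the recurrence as ranging over these children, matching the earlier definition of $undom(y)$), so that the induction hypothesis supplies $G_x(yy) = F_x(yy)$ for every such child.

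The heart of the argument is a single ``unfolding'' identity for $F$: for every space $y$,
\[
F_x(y) \;=\; \{z \mid \chi(y,z) \wedge o(x,z)\}\;\cup\;\bigcup_{y\pitchfork yy} F_x(yy).
\]
To establish it I would first make the informal phrase ``$z$ is contained in $y$'' precise as $z\in y \iff \exists l\geq 0.\,\exists y'.\, y\pitchfork^{l} y' \wedge \chi(y',z)$, using the relation $\pitchfork^{l}$ already defined in the excerpt (note that ``finite number of splits'' must be read as permitting zero). Case-splitting this existential on $l=0$ versus $l=k+1$ and unfolding the definition of $\pitchfork^{l+1}$ gives $z\in y \iff \chi(y,z) \vee \exists yy.\,(y\pitchfork yy \wedge z\in yy)$; conjoining with $o(x,z)$, which is independent of the search structure, and distributing the set-builder over the disjunction yields the displayed identity.

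Combining the pieces then closes the proof: in the unfolding identity, replace each $F_x(yy)$ on the right by $G_x(yy)$ via the induction hypothesis, and note that what remains on the right is, by definition, exactly $G_x(y)$; hence $G_x(y) = F_x(y)$.

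I expect the only real obstacle to be bookkeeping rather than mathematical depth --- pinning down the formal meaning of $z\in y$ so that the base case $\chi(y,z)$ is genuinely subsumed, and being explicit that well-foundedness of $\pitchfork$ does double duty (justifying both the definition of $G_x$ and the induction principle). No property of $\vartriangleright$, $\oplus$, or $\bot$ enters; the proposition is purely about how feasible solutions distribute across one split.
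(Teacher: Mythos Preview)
Your argument is correct: proving the one-step unfolding identity for $F_x$ from the definition of $z\in y$ via $\pitchfork^{l}$, and then invoking well-founded induction on $\pitchfork$ to transfer the recurrence from $F_x$ to $G_x$, is exactly the natural route, and your remark that ``finite number of splits'' must include zero so that $\chi(y,z)$ falls out as the $l=0$ case closes the only subtle point. The paper itself does not supply a proof --- it simply defers to the original Global Search reference (Smith~1988) --- so there is no in-paper argument to compare against; what you have written is presumably the content of that citation. Your reading of the union's subscript as ranging over children $y\pitchfork yy$ rather than the printed $yy\pitchfork y$ is also the intended one, as confirmed by how $G_x$ is unfolded in the proof of the theorem that immediately follows.
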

\begin{proof}
See \cite{Smi88}.
\end{proof}
Finally he following theorem defines a recurrence that can be used
to compute $FO_{x}(y)$: %
\begin{comment}
In the sequel we will drop the subscript $c$ when it is clear from
context
\end{comment}

\begin{thm}
\label{thm:Soundness-of-GO}Let $\pitchfork$ be a well-founded relation
of GS-theory and let $GO_{x}(y)=opt_{c}\{z\mid\chi(y,z)\wedge o(x,z)\}\cup\bigcup_{yy\in undom(y)}GO_{x}(yy))$
be a recurrence. Then $GO_{x}(y)\subseteq FO_{x}(y)$\end{thm}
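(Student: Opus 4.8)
The plan is to prove the theorem by \emph{well-founded induction} on the space $y$ with respect to the subspace relation $\pitchfork$; this is legitimate since the hypothesis makes $\pitchfork$ well-founded, which is also what makes the recurrence for $GO_x$ well-defined. It is cleanest to strengthen the statement and establish \emph{two} facts simultaneously by this induction: (i) \emph{soundness} --- $GO_x(y)\subseteq F_x(y)$, i.e.\ every returned element is a feasible solution contained in $y$; and (ii) \emph{cost-coverage} --- for every feasible $w\in F_x(y)$ there is some $z\in GO_x(y)$ with $c(x,z)\leq c(x,w)$. Writing $GO_x(y)=opt_c(S)$ with $S:=\{z\mid\chi(y,z)\wedge o(x,z)\}\cup\bigcup_{yy\in undom(y)}GO_x(yy)$, all members of $GO_x(y)$ share the same cost (the minimum over $S$); so for $z\in GO_x(y)$ and any $w\in F_x(y)$, taking the witness $z'$ of (ii) gives $c(x,z)=c(x,z')\leq c(x,w)$, whence $z\in FO_x(y)=opt_c F_x(y)$, which is the theorem. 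Throughout I would use Proposition~\ref{prop:G-recurrence}, i.e.\ $F_x(y)=G_x(y)$, to unfold $F_x$ along the $G_x$-recurrence.

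For part (i): if $z\in\{z\mid\chi(y,z)\wedge o(x,z)\}$ then $z$ is extractable from $y$ after zero splits and is feasible, so $z\in F_x(y)$; otherwise $z\in GO_x(yy)$ for some $yy\in undom(y)$, and since $undom(y)\subseteq\{yy\mid y\pitchfork yy\}$, $yy$ is a child of $y$, the induction hypothesis gives $z\in F_x(yy)$, and a solution contained in a child of $y$ is contained in $y$. For part (ii): given $w\in F_x(y)=G_x(y)$, unfold the $G_x$-recurrence. If $\chi(y,w)\wedge o(x,w)$ then $w\in S$, so $opt_c(S)=GO_x(y)$ contains some $z$ with $c(x,z)\leq c(x,w)$ (using that a nonempty, cost-bounded-below set has an optimum). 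Otherwise $w\in G_x(yy')=F_x(yy')$ for some child $yy'$ of $y$; if $yy'\in undom(y)$, the induction hypothesis (ii) for $yy'$ yields $z'\in GO_x(yy')\subseteq S$ with $c(x,z')\leq c(x,w)$, and then $opt_c(S)$ contains a $z$ with $c(x,z)\leq c(x,z')\leq c(x,w)$.

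The main obstacle is the last case of (ii): $yy'$ is a child of $y$ but $yy'\notin undom(y)$, i.e.\ $yy'$ is strictly dominated or lies in a nontrivial $\thickapprox$-class and is not the chosen representative. Here I would invoke the general-dominance property~(\ref{eq:general-dominance}): if $yy''\vartriangleright yy'$, the feasible $w\in yy'$ has a feasible counterpart $w''\in yy''$ with $c(x,w'')\leq c(x,w)$. The aim is to iterate this until we reach a child $yy^{*}$ of $y$ with $yy^{*}\in undom(y)$ carrying a feasible solution of cost $\leq c(x,w)$, and then fall back to the previous case. Making this rigorous requires the quotient frontier at level $l(y)+1$, restricted to children of $y$, to be finite --- so that the partial order induced by $\vartriangleright$ on $\thickapprox$-classes (transitive, and antisymmetric after quotienting) has minimal, i.e.\ undominated, elements --- together with transitivity of $\vartriangleright$ and the monotone cost-transfer along each dominance step. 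I would state the finite-branching assumption explicitly (it holds for the target problems, where splitting ranges over a finite edge set), and, to match the definition of $undom(y)$ exactly, note that the representative of each $\thickapprox$-class can be chosen among the children of $y$ it contains, so the iteration stays within the children of $y$ and terminates in $undom(y)$. With that in place the induction closes and the theorem follows.
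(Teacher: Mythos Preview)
Your argument is correct, but it is organized quite differently from the paper's. The paper proves the inclusion by a single calculational chain: it starts from $FO_x(y)=opt_c(F_x(y))$, replaces $F_x$ by $G_x$ via Proposition~\ref{prop:G-recurrence}, uses distributivity and idempotence of $opt_c$ to push $opt_c$ inside the union and fold the inner terms back to $FO_x(yy)$, and then performs two $\supseteq$-steps --- first restricting $\bigcup_{y\pitchfork yy}FO_x(yy)$ to $\bigcup_{yy\in undom(y)}FO_x(yy)$, then replacing each $FO_x(yy)$ by $GO_x(yy)$ using the induction hypothesis --- to arrive at $GO_x(y)$. No separate cost-coverage lemma is stated; the two $\supseteq$-steps are justified only by the annotations ``$yy\in undom(y)\Rightarrow y\pitchfork yy$'' and ``induction hypothesis''. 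You instead strengthen the induction hypothesis to carry both $GO_x(y)\subseteq F_x(y)$ and a pointwise cost-coverage clause, and then combine them. What this buys you is that the delicate part --- why shrinking the union does not raise the optimum --- is made explicit: your treatment of the dominated-child case via~(\ref{eq:general-dominance}) is precisely the content that justifies the paper's first $\supseteq$-step, and your cost-coverage at the children is what underwrites the second. The paper's calculation is shorter and highlights the algebra of $opt_c$; your version is more self-contained, surfaces the finite-branching assumption, and in effect merges the soundness and non-triviality arguments that the paper presents as two separate theorems.
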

\begin{proof}
By generalized induction. The base case is those spaces which do not
have subspaces. Then $GO_{x}(y)=opt_{c}\{z\mid\chi(y,z)\wedge o(x,z)\}$.
By Prop. \ref{prop:G-recurrence} $ $$\{z\mid\chi(y,z)\wedge o(x,z)\}=\{z\mid z\in y\wedge o(x,z)\}$.
The inductive case is as follows: 
\[
\begin{array}{l}
FO_{x}(y)\\
=\{\mbox{defn}\}\\
opt_{c}\{z\mid z\in y\wedge o(x,z)\}\\
=\{\mbox{defn of }F_{x}\}\\
opt(F_{x}(y))\\
=\{F_{x}(y)=G_{x}(y)\mbox{ by Prop.}\ref{prop:G-recurrence}\}\\
opt(\{z\mid\chi(y,z)\wedge o(x,z)\}\cup\bigcup_{y\pitchfork yy}G_{x}(yy))\\
=\{G_{x}(yy)=F_{x}(yy)\mbox{ by Prop.}\ref{prop:G-recurrence}\}\\
opt(\{z\mid\chi(y,z)\wedge o(x,z)\}\cup\bigcup_{y\pitchfork yy}F_{x}(yy))\\
=\{\mbox{distributivity of }opt\}\\
opt(opt\{z\mid\chi(y,z)\wedge o(x,z)\}\cup opt(\bigcup_{y\pitchfork yy}F_{x}(yy)))\\
=\{\mbox{distributivity and idempotence of }opt\}\\
opt(\{z\mid\chi(y,z)\wedge o(x,z)\}\cup\bigcup_{y\pitchfork yy}opt(F_{x}(yy)))\\
=\{\mbox{unfold defn of }F_{x},\mbox{ fold defn of }FO_{x}\}\\
opt(\{z\mid\chi(y,z)\wedge o(x,z)\}\cup\bigcup_{y\pitchfork yy}FO_{x}(yy))\\
\supseteq\{yy\in undom(y)\Rightarrow y\pitchfork yy\}\\
opt(\{z\mid\chi(y,z)\wedge o(x,z)\}\cup\bigcup_{yy\in undom(y)}FO_{x}(yy))\\
\supseteq\{\mbox{induction hypothesis:}FO_{x}(yy)\supseteq GO_{x}(yy)\}\\
opt(\{z\mid\chi(y,z)\wedge o(x,z)\}\cup\bigcup_{yy\in undom(y)}GO_{x}(yy))\\
=\{\mbox{fold defn of }GO_{x}\}\\
GO_{x}(y)
\end{array}by
\]

\end{proof}
The theorem states that if the feasible solutions immediately extractable
from a space $y$ are combined with the solutions obtained from $GO_{x}$
of each undominated subspace $yy$, and the optimal ones of those
retained, the result is a subset of $FO_{x}(y)$. The next theorem
demonstrate non-triviality%
\footnote{Non-triviality is similar but not identical to completeness. Completeness
requires that \emph{every} optimal solution is found by the recurrence,
which we do not guarantee.%
} of the recurrence by showing that if a feasible solution exists in
a space, then one will be found.
\begin{thm}
Let $\pitchfork$ be a well-founded relation of GS-Theory and $GO_{x}$
be defined as above. Then
\[
FO_{x}(y)\neq\emptyset\Rightarrow GO_{x}(y)\neq\emptyset
\]
\end{thm}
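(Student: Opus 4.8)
The plan is to prove $FO_x(y)\neq\emptyset\Rightarrow GO_x(y)\neq\emptyset$ by well-founded induction on the subspace relation $\pitchfork$ (equivalently, structural induction on the search tree rooted at $y$), assuming the implication for every immediate subspace $yy$ with $y\pitchfork yy$. I will use repeatedly that $opt_c$ maps a non-empty set to a non-empty one, which holds because costs are bounded below (natural-valued in the problems considered), so any non-empty set of attained costs has a least element. For the base case, $y$ has no subspaces, so $GO_x(y)=opt_c\{z\mid\chi(y,z)\wedge o(x,z)\}$, and by Proposition~\ref{prop:G-recurrence} (with the union empty) $\{z\mid\chi(y,z)\wedge o(x,z)\}=\{z\mid z\in y\wedge o(x,z)\}$; hence $GO_x(y)=FO_x(y)$ and the implication is immediate. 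For the inductive step, if some feasible solution is directly extractable from $y$, i.e.\ $\exists z\cdot\chi(y,z)\wedge o(x,z)$, then the first operand of the union defining $GO_x(y)$ is already non-empty and we are done.

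The remaining case is when $FO_x(y)\neq\emptyset$ but no feasible solution is directly extractable. Then by Proposition~\ref{prop:G-recurrence}, $F_x(y)=\{z\mid\chi(y,z)\wedge o(x,z)\}\cup\bigcup_{y\pitchfork yy}F_x(yy)$ with the first set empty, so some immediate subspace $yy_0$ has $F_x(yy_0)\neq\emptyset$ and hence $FO_x(yy_0)\neq\emptyset$. If $yy_0$ is undominated then $yy_0\in undom(y)$ and the induction hypothesis gives $GO_x(yy_0)\neq\emptyset$, so $GO_x(y)\neq\emptyset$. If $yy_0$ is dominated, I would move to the representative frontier $rfrontier_{l(y)+1}$, where the cyclic dominances have been collapsed so that $\vartriangleright$ is a genuine partial order; assuming splitting is finitely branching, each frontier level is finite, so $yy_0$ lies below a $\vartriangleright$-maximal --- hence undominated --- space $yy^{*}\in undom_{l(y)+1}$ with $yy^{*}\vartriangleright yy_0$. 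By transitivity of $\vartriangleright$ and the dominance property~(\ref{eq:general-dominance}), $yy^{*}$ also contains a feasible solution, so $FO_x(yy^{*})\neq\emptyset$, and I would finish by applying the induction hypothesis to $yy^{*}$.

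The step I expect to be the main obstacle is exactly this last one: concluding $GO_x(y)\neq\emptyset$ requires $yy^{*}$ to be an \emph{immediate subspace of $y$}, so that $yy^{*}\in undom(y)=undom_{l(y)+1}\cap\{yy\mid y\pitchfork yy\}$, whereas a priori the undominated dominator lives somewhere on the next frontier level and may hang off a different, even a pruned, parent --- as already happens in the Shortest Path instance, where a dominated partial path and the shorter path that dominates it generally have different parents. This is automatic at the root $y=\bot$, and more generally the clean way to make the induction go through is to read $GO_x$ and the schema's frontier \emph{globally}, carrying the entire undominated frontier $undom_{l}$ down level by level (as in the tail-recursive refinement mentioned earlier) rather than recursing independently into the children of each parent; the maximal-element argument above then applies to the whole frontier level. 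The remaining ingredients --- Proposition~\ref{prop:G-recurrence}, the dominance property~(\ref{eq:general-dominance}), finiteness of each frontier level, and preservation of non-emptiness by $opt_c$ --- are routine.
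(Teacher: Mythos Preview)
Your approach is essentially the same as the paper's: well-founded induction on $\pitchfork$, with the key step being that if some child $yy$ of $y$ has $FO_x(yy)\neq\emptyset$ and $yy$ is dominated, then a dominating undominated space also has a non-empty $FO_x$, after which the induction hypothesis finishes. The paper phrases this as ``the proof of Theorem~\ref{thm:Soundness-of-GO} is a chain of equalities except for two $\supseteq$ steps, and it suffices to check each preserves non-emptiness,'' but the content is what you wrote.

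Where you differ is that you are more careful than the paper. The paper's proof simply says: if $yy$ is dominated then there is some $yy'\vartriangleright yy$, hence $FO_x(yy')\neq\emptyset$, hence $\bigcup_{yy\in undom(y)}FO_x(yy)\neq\emptyset$. It does \emph{not} argue that $yy'$ (or an eventual maximal dominator) lies in $undom(y)=undom_{l(y)+1}\cap\{yy\mid y\pitchfork yy\}$, i.e.\ that it is a child of the same parent $y$. That is exactly the obstacle you flag, and your Shortest Path example shows it is a real phenomenon. The paper silently glosses over it; your suggested fix --- to read the recurrence and the schema at the level of the whole frontier $undom_l$ rather than per-parent, as in the tail-recursive form Alg.~\ref{alg:SEBFS-Pgm-Schema} actually implements --- is the right way to make the argument honest, and is consistent with how the algorithm is actually run. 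So your proposal is at least as rigorous as the paper's own proof, and identifies a genuine wrinkle the paper leaves implicit.
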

\begin{proof}
The proof of Theorem \ref{thm:Soundness-of-GO} is a series of equalities
except for two steps. It is sufficient to show that both of these
steps preserve non-triviality. The proof is again by induction over
the subspace relation. The first refinement reduces $\bigcup_{y\pitchfork yy}FO_{x}(yy)$
to $\bigcup_{yy\in undom(y)}FO_{x}(yy)$. Suppose $\exists yy\cdot\, y\pitchfork yy\wedge FO_{x}(yy)\neq\emptyset$.
If $yy$$\in undom(y)$ then we are done. Otherwise if $yy$ is dominated,
then there is some $yy'\vartriangleright yy$ and by the property
of dominance, $FO_{x}(yy')\neq\emptyset$, so $\bigcup_{yy\in undom(y)}FO_{x}(yy)\neq\emptyset$.
The second refinement follows again by induction, using the induction
hypothesis $FO_{x}(yy)\neq\emptyset\Rightarrow GO_{x}(yy)\neq\emptyset$.
\end{proof}
From the characteristic recurrence we can straightforwardly derive
a simple recursive function \texttt{bfs} to compute a non-trivial
subset of $F_{x}$ for a given $y$, shown in Alg. \ref{alg:SEBFS-Pgm-Schema}
\begin{algorithm}
\begin{lyxcode}
\noindent {\scriptsize solve~::~}\textbf{\uline{\scriptsize D}}{\scriptsize{}~->~\{}\textbf{\uline{\scriptsize R}}{\scriptsize \}}{\scriptsize \par}

{\scriptsize solve(x)~=~bfs~x~\{}\uline{\scriptsize initial}{\scriptsize (x)\}}{\scriptsize \par}

{\scriptsize{}~}{\scriptsize \par}

{\scriptsize bfs~::~D~->~\{}\uline{\scriptsize RHat}{\scriptsize \}->~\{R\}}{\scriptsize \par}

{\scriptsize bfs~x~frontier~=}{\scriptsize \par}

{\scriptsize{}~~let~localsof~y~=~let~z~=~}\uline{\scriptsize extract}{\scriptsize{}~x~y~}{\scriptsize \par}

{\scriptsize{}~~~~~~~~~~~~~~~~~~~in~if~z!=\{\}~\&\&~}\uline{\scriptsize o}{\scriptsize (x,z)~then~z~else~\{\}}{\scriptsize \par}

{\scriptsize{}~~~~~~locals~=~(flatten.map)~localsof~frontier}{\scriptsize \par}

{\scriptsize{}~~~~~~allsubs~=~(flatten.map)~(subspaces~x)~frontier}{\scriptsize \par}

{\scriptsize{}~~~~~~undom~=~\{yy~:~yy$\in$allsubs~\&\&~}{\scriptsize \par}

{\scriptsize{}~~~~~~~~~~~~~~~~~~~~(yy'$\in$subs~\&\&~yy'~}\uline{\scriptsize `dominates`}{\scriptsize{}~yy~$\Rightarrow$~yy==yy')\}~}{\scriptsize \par}

{\scriptsize{}~~~~~~subsolns~=~bfs~x~undom}{\scriptsize \par}

{\scriptsize{}~~in~opt(locals~$\cup$~subsolns)}{\scriptsize \par}

{\scriptsize{}~}{\scriptsize \par}

{\scriptsize subspaces~::~D~->~RHat~->~\{RHat\}}{\scriptsize \par}

{\scriptsize subspaces~x~y~=~\{yy:~}\uline{\scriptsize split}{\scriptsize (x,y,yy))}{\scriptsize \par}

{\scriptsize{}~}{\scriptsize \par}

{\scriptsize opt~::~\{R\}~->~\{R\}}{\scriptsize \par}

{\scriptsize opt~zs~=~min~\{}\uline{\scriptsize c}{\scriptsize{}~x~z~|~z~$\in$zs\}}{\scriptsize \par}
\end{lyxcode}
\caption{\label{alg:SEBFS-Pgm-Schema}pseudo-Haskell Program Schema for EBFS
(schema parameters underlined)}
\end{algorithm}

The final program schema that is included in the Specware library
is the result of incorporating a number of other features of GS such
as necessary filters, bounds tests, and propagation, which are not
shown here. Details of these and other techniques are in \cite{Smi88}.

\subsection{A class of strictly greedy algorithms (SG)}

A greedy algorithm \cite{CLR01} is one which repeatedly makes a locally
optimal choice. For some classes of problems this leads to a globally
optimum choice. We can get a characterization of optimally greedy
algorithms within EBFS by restricting the size of $undom_{l}$ for
any $l$ to 1. If $undom_{l}\neq\emptyset$ then the singleton member
$y^{*}$ of $undom_{l}$ is called the \emph{greedy} choice. In other
work \cite{NSC12b} we show how to derive greedy algorithms for a
variety of problems including Activity Selection, One machine scheduling,
Professor Midas' Traveling Problem, Binary Search. %
{} %
\begin{comment}
so the derived program schema can be transformed into an efficiently
implementable tail-recursive form. As a matter of fact, this can be
done for the general case too by computing and retaining the optimal
solution as the computation procee ds, although we do not show that
here.
\end{comment}

\section{Methodology}

We strongly believe that every formal approach should be accompanied
by a methodology by which it can be used by a competent developer,
without needing great insights. Guided program synthesis already goes
a long way towards meeting this requirement by capturing design knowledge
in a reusable form. The remainder of the work to be done by a developer
consists of instantiating the various parameters of the program schema.
In the second half of this paper, we demonstrate how carrying this
out systematically allows us to derive several related graph algorithms,
revealing connections that are not always obvious from textbook descriptions.
We wish to reiterate that once the dominance relation and other operators
in the schema have been instantiated, \emph{the result is a complete
solution to the given problem}. We focus on dominance relations because
they are arguably the most challenging of the operators to design.
The remaining parameters can usually be written down by visual inspection.

The simplest form of derivation is to reason backwards from the conclusion
of $y\rightsquigarrow y'\Rightarrow o^{*}(x,y'\oplus e)\Rightarrow o^{*}(x,y\oplus e)$,
while assuming $o^{*}(x,y'\oplus e)$ %
\begin{comment}
using a procedure called \emph{derived antecedents}.
\end{comment}
. The additional assumptions that are made along the way form the
required semi-congruence condition. The following example illustrates
the approach. %
\begin{comment}
which is also called \emph{derived antecedents.}
\end{comment}

\begin{example}
\label{Dom-relation-for-SP}Derivation of the semi-congruence relation
for Single Pair Shortest Path in Eg. \ref{Spec-of-SP} is a straightforward
calculation as shown in Fig \ref{fig:semi-congruence-for-SP}. It
relies on the specification of Shortest Path given in Eg. \ref{Spec-of-SP}
and the GS-theory in Eg. \ref{GS-theory-for-SP}. %
\begin{comment}
To construct a semi-congruence condition, assume $o^{*}(x,y'\oplus e)$
and reason backwards from $o^{*}(x,y\oplus e)$.
\end{comment}
{} 
\begin{figure}
$\begin{array}{l}
o^{*}(x,y\oplus e)\\
=\mbox{\{defn of }o^{*}\}\\
\exists z\cdot\,\chi(y\oplus e,z)\wedge o(x,z)\\
=\{\mbox{defn of }\chi\}\\
o(x,y\oplus e)\\
=\{\mbox{defn of }o\}\\
path?(y\oplus e,x.s,x.e)\\
=\{\mbox{distributive law for }path?\}\\
\exists n\cdot\, path?(y,x.s,n)\wedge path?(e,n,x.e)\\
\Leftarrow\{o^{*}(x,y'\oplus e)\mbox{, ie.}\exists m\cdot\, path?(y',x.s,m)\wedge path?(e,m,x.e).\mbox{ Let }m\mbox{ be witness for }n\}\\
path?(y,x.s,m)\wedge path?(e,m,x.e)\\
=\{m=last(y).t\mbox{, (where }last\mbox{ returns the last element of a sequence)}\}\\
last(y).t=last(y').t\wedge path?(y,x.s,n)
\end{array}$

\caption{\label{fig:semi-congruence-for-SP}Derivation of semi-congruence relation
for Single Pair Shortest Path}
\end{figure}

The calculation shows that a path $y$ is semi-congruent to $y'$
if $y$ and $y'$ both end at the same node and additionally $y$
is itself a valid path from the start node to its last node. Since
the cost function is compositional, this immediately produces a dominance
relation $y\vartriangleright y'=last(y)=last(y')\wedge path?(y,x.s,n)\wedge\sum_{edge\in y}edge.w\leq\sum_{edge'\in y'}edge'.w$.
Note the use of the distributive law for $path?$ in step 4. Such
laws are usually formulated as part of a domain theory during a domain
discovery process, or even as part of the process of trying to carry
out a derivation such as the one just shown. Given an appropriate
constructive prover (such as the one in KIDS \cite{Smi90}) such a
derivation could in fact be automated. Other examples that have been
derived using this approach are Activity Selection \cite{NSC10b},
Integer Linear Programming \cite{Smi88}, and variations on the Maximum
Segment Sum problem \cite{NC09}. 

While this dominance relation could in principle be used to computer
Single Source Shortest Path using a Best-First search (such as A{*})
it would not be very efficient as every pair of nodes on the frontier
would need to be compared. In the next section, a more powerful dominance
relation is derived which can be used in a Breadth-First search, and
even more importantly, be shown to be in the SG class, resulting in
a very efficient greedy algorithm. The dominance relation just derived
is still utilized, but in a subsidiary role. 
\end{example}

\section{More complex derivations: A family of related algorithms}

\subsection{(Single Source) Shortest Path}

Previously in Eg. \ref{Dom-relation-for-SP} we derived a dominance
relation for the (undirected) single pair shortest path problem. To
solve the problem of finding all shortest paths from a given start
node to every other node in the graph it is convenient to consider
the output as a set of edges that form what is called a \emph{path
tree}, a subgraph of the input graph which forms a spanning tree rooted
at the start node. The desired output is a path tree in which every
path from the root is the shortest. The specification of Single Pair
Shortest Path in Fig. \ref{fig:Spec-of-SP-1} is revised as shown
in Fig. \ref{fig:Spec-of-SP}
\begin{figure}
$\begin{array}{rcl}
D & \mapsto & \langle s:Node,edges:\{Edge\}\rangle\\
 &  & Edge=\langle a:Node,b:Node,w:Nat\rangle\\
R & \mapsto & \{Edge\}\\
o & \mapsto & \lambda(x,z)\cdot\, connected(x,z)\wedge acyclic(x,z)\\
c & \mapsto & \lambda(x,z)\,\cdot\sum_{p\in pathsFrom(x.s)}c'(p)\\
 &  & c'(p)=\sum_{edge\in p}edge.w
\end{array}$

\caption{\label{fig:Spec-of-SP}Specification of Shortest Path problem}

\end{figure}
The revised instantiation of Global Search theory is shown in Fig.
\ref{fig:GS-for-SP}
\begin{figure}
$\begin{array}{rcl}
\widehat{R} & \mapsto & R\\
\bot & \mapsto & \lambda x\cdot\,\{\}\\
\pitchfork & \mapsto & \lambda(x,p,pe)\cdot\,\exists e\in x.edges\cdot\, pe=p\cup\{e\}\\
\chi & \mapsto & \lambda(z,p)\cdot\, p=z\\
\oplus & \mapsto & \cup
\end{array}$

\caption{\label{fig:GS-for-SP}GS instantiation for Shortest Path}

\end{figure}
In what follows, the extends operator $\oplus$ is shown by simple
concatenation. The goal is to show that there is at most one undominated
child following a split of a partial solution $\alpha$. Let $\alpha e$
and $\alpha e'$ be two children following a split of $\alpha$, that
is the graphs $\alpha$ with edge $e$ added and that with $e'$ added.
Without loss of generality (w.l.o.g.) assume neither $e$ nor $e'$
are already contained in $\alpha$ and both connect to $\alpha$.
Let $z'=\alpha e'\omega'$ be a feasible solution derived from $\alpha e'$.
The task is to construct a feasible solution $z$ from $\alpha e$
and discover the conditions under which it is cheaper than $z'$.
We will use the definition of general dominance (\ref{eq:general-dominance}),
repeated here for convenience:
\[
\forall z'\in y'\cdot\, o(x,z')\Rightarrow\exists z\in y\cdot\, o(x,z)\wedge c(x,z)\leq c(x,z')
\]
Establishing $o(\alpha e\omega)$ requires \emph{connected}$(\alpha e\omega)$
and \emph{acyclic}$(\alpha e\omega)$. %
\begin{comment}
{*}{*}{*}{*}

Our first attempt will be to try $\omega=e'\omega'$. Then $z=\alpha ee'\omega'$
and $connected(\alpha ee'\omega')$ follows from $connected(\alpha e'\omega')$
by the \emph{monotonicity} of $connected$.
\end{comment}

In guided program synthesis, it is often useful to write down laws
\cite{Smi90} that will be needed during the derivation. Some of the
most useful laws are distributive laws and monotonicity laws. For
example the following distributive law applies to $ayclic$ : 
\[
\mathit{acyclic}(\alpha\beta)=\mathit{acyclic}(\alpha)\wedge\mathit{acyclic}(\beta)\wedge\mathit{ac}(\alpha,\beta)
\]
where $ac$ defines what happens at the ``boundary'' of $\alpha$
and $\beta$: %
\begin{comment}
$\mathit{acyclic}(\alpha,\beta)=\forall e_{j}\ldots e_{k}\in paths(\alpha),\forall e_{j}\ldots e_{k}\in paths(\beta)\cdot\, e_{j}.a\neq e_{l}.a\vee e_{j}.b\neq e_{l}.b$
<= will require that paths have their edges oriented correctly.
\end{comment}
{} 
\[
\mathit{ac}(\alpha,\beta)=\forall m,n\cdot\,\exists p\in\alpha^{*}\cdot\, path?(p,m,n)\Rightarrow\neg\exists q\in\beta^{*}\cdot\, path?(q,m,n)
\]
requiring that if $p$ is a path in $\alpha$ ($\alpha^{*}$ is a
regular expression denoting all possible sequences of edges in $\alpha$)
connecting $m$ and $n$ then there should be no path between $m$
and $n$ in $\beta$. Examples of monotonicity laws are $acyclic(\alpha\beta)\Rightarrow acyclic(\alpha)$
and  $connected(\alpha) \wedge connected(\beta) \wedge nodes(\alpha)\cap nodes(\beta) \neq \emptyset  \Rightarrow connected(\alpha\beta$).
%$connected(\alpha)\Rightarrow connected(\alpha\beta$) . 
By using
the laws constructively, automated tools such as KIDS \cite{Smi90}
can often suggest instantiations for terms. For instance, after being told  $connected(\alpha e'\omega')$, the tool 
could apply the monotonicity
law for $connected$ to suggest $connected(\alpha ee'\omega')$,
%since we
%are assuming $connected(\alpha e'\omega')$, applying the monotonicity
%law for $connected$, one possible suggestion is $connected(\alpha ee'\omega')$,
that is to try $\omega=e'\omega'$. With this binding for $\omega$
we can attempt to establish $acyclic(\alpha e\omega)$, by expanding
its definition. One of the terms is $\ldots ac($$e,\alpha e'\omega')\ldots$
which fails because $conn(\alpha e'\omega')$ implies $path?(\alpha e'\omega',e.a,e.b)$
so adding edge $e$ creates a cycle. The witness to the failure is
some path $\pi'=e_{j}\ldots e_{k}$ where $e_{j}.a=e.a\wedge e_{k}.b=e.b$.
One possibility is that $e_{j}=e_{k}=e$, that is $\omega'$ contains
$e$. If so, let $\omega'=e\psi'$ for some $\psi'$. Then $z'=\alpha e'\omega'=\alpha e'e\psi'=\alpha ee'\psi'$.
Let $\omega=e'\psi'$ and now $z=z'$. Otherwise, w.l.o.g assume that
$e$ connects with $\alpha$ at $e.a$, and therefore so does $e_{j}$,
so the case $e_{j}.b=e.b$ is not very interesting. The only option
then is to remove edge $e_{k}$ from $\omega'$ . Let $\omega'=e_{k}\psi'$
and so $\omega$ is $e'\psi'$. Now the requirement becomes
\[
acyclic(\alpha e)\wedge acyclic(e'\psi')\wedge ac(\alpha e,e'\psi')
\]
 $acyclic(e'\psi')$ follows from $acyclic(\alpha e'\psi')$ by monotonicity
and $ac(\alpha e,e'\psi')$ follows from $acyclic(\alpha e'\psi')$
and the fact that $e$ was chosen above to remove the cycle in $\alpha ee'\omega'$
\begin{comment}
Rabbit?
\end{comment}
. This demonstrates $o(\alpha e\omega)$ provided $acyclic(\alpha e)$
where $\omega$ is constructed as above. 

Finally, to establish general dominance it is necessary to show that
$z$ costs no more than $z'$. Note that the cost of a solution is
the sum of individual path costs starting from $x.s$. Let $m$ denote
$e.a$ and $n$ denote $e.b$ (and analogously for $e'$). Now consider
a path to a node $p$ in $z'$. If the path to $p$ does not contain
edge $e_{k}$ ie. pass through $n$ then the same path holds in $z$.
Otherwise let $\beta'e'_{i}\gamma'e"\delta$ be a path to $p$ in
$z'$ where $e"$ is the edge $e_{k}$ above (see Fig. \ref{fig:APSP-y'}).
\begin{figure}
\includegraphics[scale=0.7]{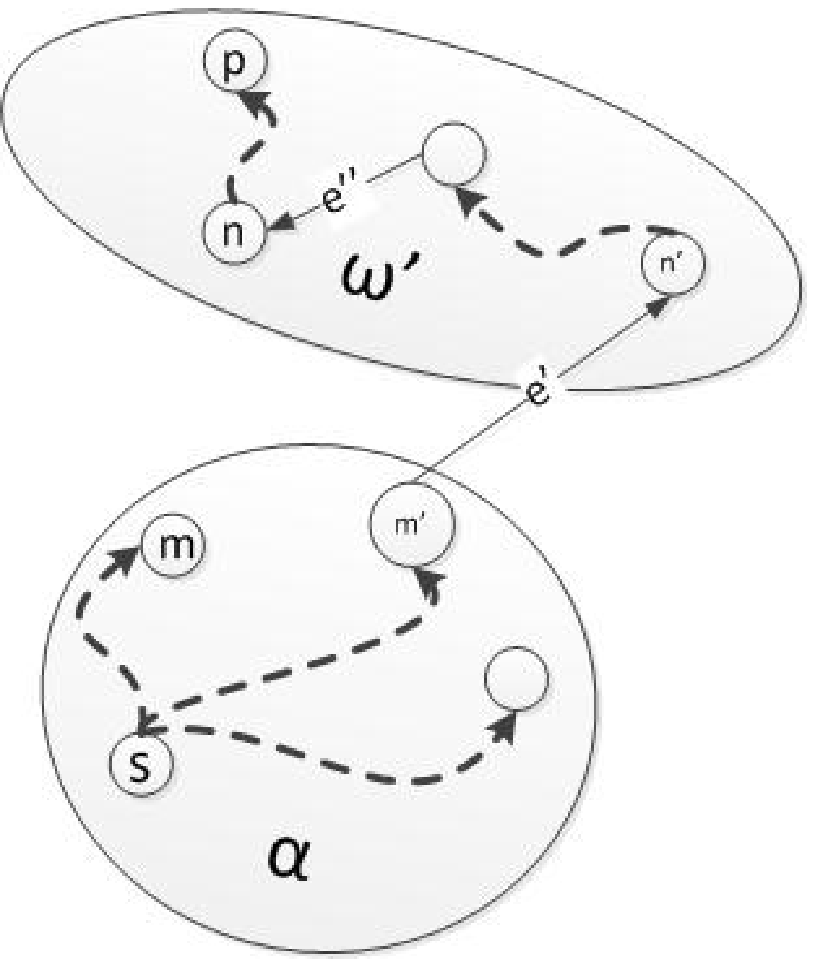}\caption{\label{fig:APSP-y'}Feasible solution $\alpha e'\omega'$}
\end{figure}
 $\beta'$ is a path from $x.s$ in $\alpha$ and $e'_{i}$ is some
edge that leads out of $\alpha$ on the path to $p$. Then the corresponding
path in $z$ is $\beta e\delta$ (see Fig. \ref{fig:APSP-y}). 
\[
\begin{array}{l}
c(\alpha e\omega,\beta e\delta)\leq c(\alpha e'\omega',\beta'e{}_{i}\gamma'e"\delta)\\
=\{\mbox{expand defns}\}\\
c(z,\beta e)+c(z,\delta)\leq c(z',\beta'e_{i})+c(z',e")+c(z,\delta)\\
=\{\mbox{+ve edge weights, triangle inequality}\}\\
c(z,\beta e)\leq c(z',\beta'e{}_{i})
\end{array}
\]
As there were no restrictions on $e'$ above, let $e_{i}$ be the
witness for $e'$ and this establishes
\begin{figure}
\includegraphics[scale=0.7]{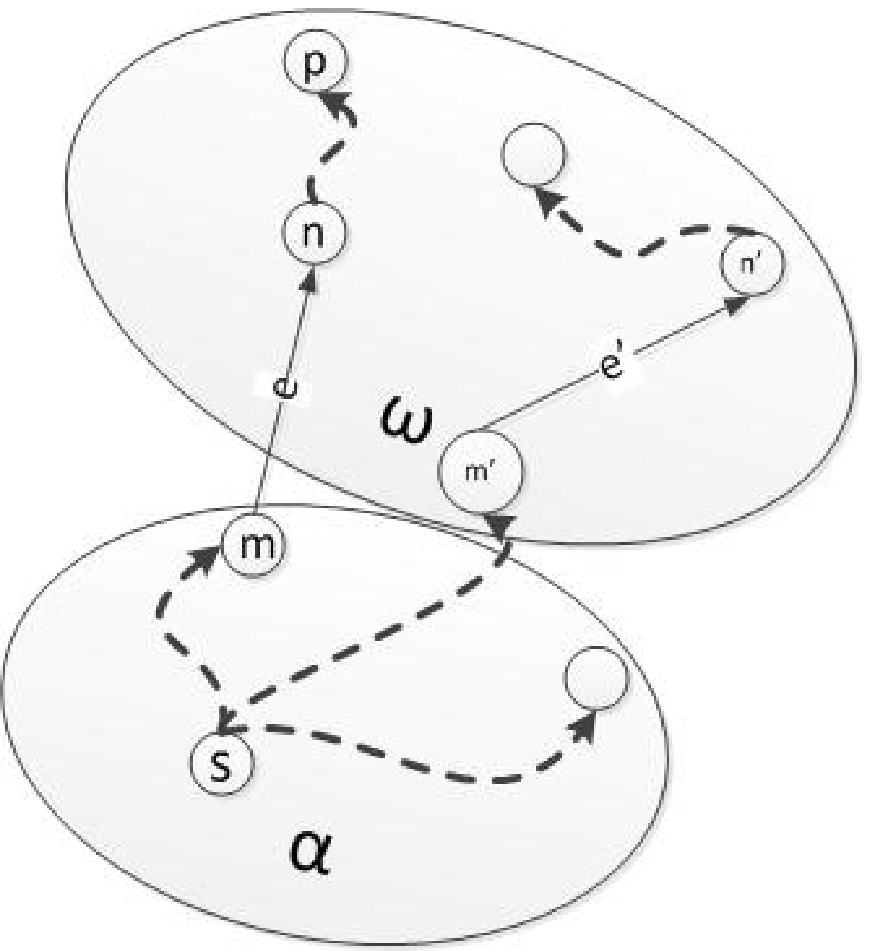}

\caption{\label{fig:APSP-y}Feasible solution $\alpha e\omega$}
\end{figure}
That is, provided the path $\beta e$ is shorter than $\beta'e'$,
there cannot be a shorter path via $e'$ to $n$. As cost is the sum
of the path costs, it follows that $c(x,\alpha e\omega)\leq c(\alpha e'\omega')$.
The dominance condition is then $\alpha e\vartriangleright\alpha e'\Leftarrow acyclic(\alpha e)\wedge c(z,\beta e)\leq c(z',\beta'e')$
. Finally, as %
\begin{comment}
$e'$ was chosen by the ``adversary'', 
\end{comment}
{} it is not known at the time of the split which $e'$ will lie on
the path to $e.b$, to be conservative, let $e$ be that edge whose
endpoint $e.b$ is the closest to the start node. This is therefore
the greedy choice. Iincorporating finite differencing to incrementally
maintain the distances to the nodes, using the dominance relation
derived earlier for Single Pair Shortest Path to eliminate the longer
paths to a node, and data structure refinement results in an algorithm
similar to Dijkstra's algorithm for MSTs.

\subsection{Minimum Spanning Tree}

The specification of MST is very similar to that of Shortest Path,
with the difference that there is no longer a distinguished node $s$,
the input graph must be connected, and the cost of a solution is simply
the sum of the weights of the edges in the tree
\begin{figure}
$\begin{array}{rcl}
D & \mapsto & \{Edge\}\mid connected\\
 &  & Edge=\langle f:Node,t:Node,w:Nat\rangle\\
R & \mapsto & \{Edge\}\\
o & \mapsto & \lambda(x,z)\cdot\, connected(x,z)\wedge acyclic(x,z)\\
c & \mapsto & \lambda(x,z)\,\cdot\sum_{edge\in z}edge.w
\end{array}$

\caption{\label{fig:Spec-of-MST}Specification of Min. Spanning Tree problem}

\end{figure}
The instantiation of the simple GS operators is as for SP. Again,
any edge that is added must not create a cycle or it cannot lead to
a feasible solution. We will describe the algorithm construction process
informally so as to expose the connection with the Shortest Path algorithm
more clearly. However the derivations shown here can also be similarly
formalized.

There are now two ways to satisfy the acyclicity requirement. One
is by choosing an edge connecting a node in $\alpha$ to one outside
of $\alpha$. Another is to choose an edge that connects two nodes
within $\alpha$, being careful not to create cycles. The two options
are examined next,

Option 1: Let $z'=\alpha e'\omega'$ be a feasible solution derived
from $\alpha e'$. If $\omega'$ includes $e$ then let $\omega$
in a feasible solution $z=\alpha e\omega$ simply be $\omega'-\{e\}\cup\{e'\}$
and then $z=z'$. Otherwise, if $\omega$' does not contain $e$ there
must be some other path connecting $\alpha$ with $e.t$. W.l.o.g.
assume that path is via $e'$. If $\alpha e'\omega'$ is feasible,
then it is a tree, so $\omega$' is also a tree. Therefore it is not
difficult to show that $z=\alpha e\omega'$ is also a spanning tree.
Now to show dominance, derive conditions under which $z$ is cheaper
than $z'$: 
\[
\begin{array}{l}
c(x,\alpha e\omega')\leq c(x,\alpha e'\omega')\\
=\{\mbox{defn of }c\}\\
\sum_{edge\in\alpha e\omega'}edge.w\leq\sum_{edge\in\alpha e'\omega'}edge.w\\
=\\
e.w\leq e'.w
\end{array}
\]
Finally, as %
\begin{comment}
$e'$ was chosen by the ``adversary'', 
\end{comment}
{} it is not known at the time of the split which $e'$ will lie on
the path to $e.t$, to be conservative, let $e$ be that edge with
the least weight connecting $\alpha$ with an external node . This
is therefore the greedy choice. The result is an algorithm that is
similar to Prim's algorithm for MSTs.

Option 2: The difference with Option 1 is in how $e$ is chosen in
order to ensure acyclicity. For a feasible solution, $\alpha$ must
not contain any cycles. Therefore it consists of a collection of acyclic
connected components, ie trees. Any new edge cannot connect nodes
within a component without introducing a cycle. Therefore it must
connect two component trees. Connecting two trees by a single edge
results in a new tree. As in Option 1, let $z'=\alpha e'\omega'$
be a feasible solution derived from $\alpha e'$. If $\omega'$ includes
$e$ then let $\omega$ in a feasible solution $z=\alpha e\omega$
simply be $\omega'-\{e\}\cup\{e'\}$ and then $z=z'$. Otherwise,
if $\omega$' does not contain $e$ there must be some other edge
used to connect the two trees that $e$ would have connected. W.l.o.g.
assume that edge is $e'$. If $\alpha e'\omega'$ is feasible, then
it is a tree, so $\omega$' is also a tree. Therefore it is not difficult
to show that $z=\alpha e\omega'$ is also a spanning tree. The derivation
of a cost comparison relation is identical to Option 1, and once again
the greedy choice is the edge $e$ that connects two trees and is
of least weight. The result of this option is an algorithm that is
similar to Kruskal's algorithm.

In conclusion, we observe that far from being completely different
algorithms, Dijkstra's algorithm, Prim's algorithm and Kruskal's algorithm
differ only in very small number of (albeit important) ways. In contrast,
many textbook descriptions of the algorithms introduce the algorithms
out of the blue, followed by separate proofs of correctness. We have
shown how a systematic procedure can derive different algorithms,
with relatively minor changes to the derivations.

\section{Related Work}

Gulwani et al. \cite{SGF10,GJTV11} describe a powerful program synthesis
approach called \emph{template-based synthesis}. A user supplies a
template or outline of the intended program structure, and the tool
fills in the details. A number of interesting programs have been synthesized
using this approach, including Bresenham's line drawing algorithm
and various bit vector manipulation routines. A related method is
inductive synthesis \cite{IGMS10} in which the tool synthesizes a
program from examples. The latter has been used for inferring spreadsheet
formulae from examples. All the tools rely on powerful SMT solvers.
The Sketching approach of Solar-Lezama et al \cite{PBS11} also relies
on inductive synthesis. A \emph{sketch}, similar in intent to a template,
is supplied by the user and the tool fills in such aspects as loop
bounds and array indexing. Sketching relies on efficient SAT solvers.
\begin{comment}
Although our program schemas appear superficially similar to templates
and sketches, they are different in their intent.
\end{comment}
{} To quote Gulwani et al. the benefit of the template approach is that
``the programmer only need write the structure of the code and the
tool fills out the details'' \cite{SGF10}.%
\begin{comment}
 Guided synthesis inverts this: The schema is provided by the tool
and the programmer fills in the details. Even the second step has
been automated in KIDS, a predecessor to Specware, where the tool
builds the morphism from library operations and deductive calculations.
\end{comment}
Rather than the programmer supplying an arbitrary template, though,
we suggest the use of a program schema from the appropriate algorithm
class (refer to Step 2 of the process in Sec. \ref{sub:Process-1}).
We believe that the advantage of such an approach is that, based on
a sound theory, much can already be inferred at the abstract level
and this is captured in the theory associated with the algorithm class.
Furthermore, knowledge of properties at the abstract level allows
specialization of the program schema with information that would otherwise
have to either be guessed at by the programmer devising a template
or inferred automatically by the tool (e.g. tail recursive implementation
or efficient implementation of dominance testing with hashing). We
believe this will allow semi-automated synthesis to scale up to larger
problems such as constraint solvers (SAT, CSP, LP, MIP, etc.), planning
and scheduling, and O/S level programs such as garbage collectors
\cite{PPS10}.

Program verification is another field that shares common goals with
program synthesis - namely a correct efficient program. The difference
lies in approach - we prefer to construct the program in a way that
is guaranteed to be correct, as opposed to verifying its correctness
after the fact. Certainly some recent tools such as Dafny \cite{Lei10}
provide very useful feedback in an IDE during program construction.
But even such tools requires significant program annotations in the
form of invariants to be able to automatically verify non-trivial
examples such as the Schorr-Waite algorithm \cite{Lei10}. Nevertheless,
we do not see verification and synthesis as being necessarily opposed.
For example, ensuring the correctness of the instantiation of several
of the operators in the program schema which is usually done by inspection
is a verification task, as is ensuring correctness of the schema that
goes in the class library. We also feel that recent advances in verification
via SMT solvers will also help guided synthesis by increasing the
degree of automation.

Refinement is generally viewed as an alternative to synthesis. A specification
is gradually refined into an efficient executable program. Refinement
methods such as Z and B have proved to be very popular. In contrast
to refinement, guided program synthesis already has the program structure
in place, and the main body of work consists of instantiating the
schema parameters followed by various program transformations many
of which can be mechanically applied. Both refinement and synthesis
rely extensively on tool support, particularly in the form of provers.
We expect that advances in both synthesis and refinement will benefit
the other field.%
\begin{comment}
There is of course a vast literature on algorithms of various kinds,
from the computer science, operations reearch, and AI communities.
Much of the work on algorithm development however sees each algorithm
as a one-off solution. We take the view that much of algorithm design
can be carried out in a more systematic manner.
\end{comment}

\section{Summary and Future Work}

We have formulated a theory of efficient breadth-first search based
on dominance relations. A very useful specialization of this class
occurs when there is at most one undominated child node. This is the
class of Strictly Greedy algorithms. We have also derived a recurrence
from which a simple program schema can be easily constructed. We have
shown how to systematically derive dominance relations for a family
of important graph algorithms revealing connections between them that
are obscured when each algorithm is presented in isolation. 

Nearly all the derivations shown in this paper have been carried out
by hand. However, they are simple enough to be automated. We plan
on building a prover that incorporates the ideas mentioned in here.
We are encouraged by the success of a similar prover that was part
of KIDS, a predecessor to Specware.

\bibliographystyle{eptcs}
\bibliography{../my}

\begin{thebibliography}{10}
\providecommand{\bibitemdeclare}[2]{}
\providecommand{\surnamestart}{}
\providecommand{\surnameend}{}
\providecommand{\urlprefix}{Available at }
\providecommand{\url}[1]{\texttt{#1}}
\providecommand{\href}[2]{\texttt{#2}}
\providecommand{\urlalt}[2]{\href{#1}{#2}}
\providecommand{\doi}[1]{doi:\urlalt{http://dx.doi.org/#1}{#1}}
\providecommand{\bibinfo}[2]{#2}

\bibitemdeclare{misc}{S}
\bibitem{S}
\emph{\bibinfo{title}{Specware}}.
\newblock \bibinfo{note}{Http://www.specware.org}.

\bibitemdeclare{book}{CLR01}
\bibitem{CLR01}
\bibinfo{author}{T.~\surnamestart Cormen\surnameend},
  \bibinfo{author}{C.~\surnamestart Leiserson\surnameend},
  \bibinfo{author}{R.~\surnamestart Rivest\surnameend} \&
  \bibinfo{author}{C.~\surnamestart Stein\surnameend} (\bibinfo{year}{2001}):
  \emph{\bibinfo{title}{Introduction to Algorithms}}, \bibinfo{edition}{2nd}
  edition.
\newblock \bibinfo{publisher}{MIT Press}.

\bibitemdeclare{book}{Dec03}
\bibitem{Dec03}
\bibinfo{author}{R~\surnamestart Dechter\surnameend} (\bibinfo{year}{2003}):
  \emph{\bibinfo{title}{Constraint Processing}}.
\newblock \bibinfo{publisher}{Morgan Kauffman}.

\bibitemdeclare{inproceedings}{GJTV11}
\bibitem{GJTV11}
\bibinfo{author}{S.~\surnamestart Gulwani\surnameend},
  \bibinfo{author}{S.~\surnamestart Jha\surnameend},
  \bibinfo{author}{A.~\surnamestart Tiwari\surnameend} \&
  \bibinfo{author}{R.~\surnamestart Venkatesan\surnameend}
  (\bibinfo{year}{2011}): \emph{\bibinfo{title}{Synthesis of loop-free
  programs}}.
\newblock In: {\sl \bibinfo{booktitle}{PLDI}}, pp. \bibinfo{pages}{62--73},
  \doi{10.1145/2F1993498.1993506}.

\bibitemdeclare{article}{Iba77}
\bibitem{Iba77}
\bibinfo{author}{T.~\surnamestart Ibaraki\surnameend} (\bibinfo{year}{1977}):
  \emph{\bibinfo{title}{The Power of Dominance Relations in Branch-and-Bound
  Algorithms}}.
\newblock {\sl \bibinfo{journal}{J. ACM}}
  \bibinfo{volume}{24}(\bibinfo{number}{2}), pp. \bibinfo{pages}{264--279},
  \doi{10.1145/2F322003.322010}.

\bibitemdeclare{inproceedings}{IGMS10}
\bibitem{IGMS10}
\bibinfo{author}{S.~\surnamestart Itzhaky\surnameend},
  \bibinfo{author}{S.~\surnamestart Gulwani\surnameend},
  \bibinfo{author}{N.~\surnamestart Immerman\surnameend} \&
  \bibinfo{author}{M.~\surnamestart Sagiv\surnameend} (\bibinfo{year}{2010}):
  \emph{\bibinfo{title}{A simple inductive synthesis methodology and its
  applications}}.
\newblock In: {\sl \bibinfo{booktitle}{OOPSLA}}, pp. \bibinfo{pages}{36--46},
  \doi{10.1145/2F1869459.1869463}.

\bibitemdeclare{incollection}{Kre98}
\bibitem{Kre98}
\bibinfo{author}{C.~\surnamestart Kreitz\surnameend} (\bibinfo{year}{1998}):
  \emph{\bibinfo{title}{Program Synthesis}}.
\newblock In \bibinfo{editor}{W.~\surnamestart Bibel\surnameend} \&
  \bibinfo{editor}{P.~\surnamestart Schmitt\surnameend}, editors: {\sl
  \bibinfo{booktitle}{Automated Deduction -- A Basis for Applications}},
  chapter \bibinfo{chapter}{III.2.5}, \bibinfo{volume}{III},
  \bibinfo{publisher}{Kluwer}, pp. \bibinfo{pages}{105--134}.

\bibitemdeclare{inproceedings}{Lei10}
\bibitem{Lei10}
\bibinfo{author}{K.~R.~M. \surnamestart Leino\surnameend}
  (\bibinfo{year}{2010}): \emph{\bibinfo{title}{Dafny: an automatic program
  verifier for functional correctness}}.
\newblock In: {\sl \bibinfo{booktitle}{Proc. 16th intl. conf. on Logic for
  Prog., AI, \& Reasoning}}, \bibinfo{series}{LPAR}, pp.
  \bibinfo{pages}{348--370}, \doi{10.1007/2F978-3-642-17511-4\_20}.

\bibitemdeclare{phdthesis}{Ned12}
\bibitem{Ned12}
\bibinfo{author}{S.~\surnamestart Nedunuri\surnameend} (\bibinfo{year}{2012}):
  \emph{\bibinfo{title}{Theory and Techniques for Synthesizing Efficient
  Breadth-First Search Algorithms}}.
\newblock Ph.D. thesis, \bibinfo{school}{Univ. of Texas at Austin}.

\bibitemdeclare{inproceedings}{NC09}
\bibitem{NC09}
\bibinfo{author}{S.~\surnamestart Nedunuri\surnameend} \& \bibinfo{author}{W.R.
  \surnamestart Cook\surnameend} (\bibinfo{year}{2009}):
  \emph{\bibinfo{title}{Synthesis of Fast Programs for Maximum Segment Sum
  Problems}}.
\newblock In: {\sl \bibinfo{booktitle}{Intl. Conf. on Generative Prog. and
  Component Engineering (GPCE)}}, \doi{10.1145/2F1621607.1621626}.

\bibitemdeclare{article}{NSC10b}
\bibitem{NSC10b}
\bibinfo{author}{S.~\surnamestart Nedunuri\surnameend}, \bibinfo{author}{D.~R.
  \surnamestart Smith\surnameend} \& \bibinfo{author}{W.~R. \surnamestart
  Cook\surnameend} (\bibinfo{year}{2010}): \emph{\bibinfo{title}{A Class of
  Greedy Algorithms and Its Relation to Greedoids}}.
\newblock {\sl \bibinfo{journal}{Intl. Colloq. on Theoretical Aspects of
  Computing (ICTAC)}}, \doi{10.1007/2F978-3-642-14808-8\_24}.

\bibitemdeclare{inproceedings}{NSC12b}
\bibitem{NSC12b}
\bibinfo{author}{S.~\surnamestart Nedunuri\surnameend}, \bibinfo{author}{D.~R.
  \surnamestart Smith\surnameend} \& \bibinfo{author}{W.~R. \surnamestart
  Cook\surnameend} (\bibinfo{year}{2012}): \emph{\bibinfo{title}{Theory and
  Techniques for a Class of Efficient Breadth-First Search Algorithms}}.
\newblock In: {\sl \bibinfo{booktitle}{Intl. Symp. on Formal Methods (FM)}}.

\bibitemdeclare{inproceedings}{PPS10}
\bibitem{PPS10}
\bibinfo{author}{D.~\surnamestart Pavlovic\surnameend},
  \bibinfo{author}{P.~\surnamestart Pepper\surnameend} \&
  \bibinfo{author}{D.~R. \surnamestart Smith\surnameend}
  (\bibinfo{year}{2010}): \emph{\bibinfo{title}{Formal Derivation of Concurrent
  Garbage Collectors.}}
\newblock In: {\sl \bibinfo{booktitle}{Math. of Program Constr. (MPC)}},
  \doi{10.1007/2F978-3-642-13321-3\_20}.

\bibitemdeclare{inproceedings}{PBS11}
\bibitem{PBS11}
\bibinfo{author}{Y.~\surnamestart Pu\surnameend},
  \bibinfo{author}{R.~\surnamestart Bod\'{\i}k\surnameend} \&
  \bibinfo{author}{S.~\surnamestart Srivastava\surnameend}
  (\bibinfo{year}{2011}): \emph{\bibinfo{title}{Synthesis of first-order
  dynamic programming algorithms}}.
\newblock In: {\sl \bibinfo{booktitle}{OOPSLA}}, pp. \bibinfo{pages}{83--98},
  \doi{10.1145/2F2048066.2048076}.

\bibitemdeclare{techreport}{Smi88}
\bibitem{Smi88}
\bibinfo{author}{D.~R. \surnamestart Smith\surnameend} (\bibinfo{year}{1988}):
  \emph{\bibinfo{title}{Structure and Design of Global Search Algorithms}}.
\newblock \bibinfo{type}{Tech. Rep.} \bibinfo{number}{Kes.U.87.12},
  \bibinfo{institution}{Kestrel Institute}.

\bibitemdeclare{article}{Smi90}
\bibitem{Smi90}
\bibinfo{author}{D.~R. \surnamestart Smith\surnameend} (\bibinfo{year}{1990}):
  \emph{\bibinfo{title}{KIDS: A Semi-Automatic Program Development System}}.
\newblock {\sl \bibinfo{journal}{IEEE Trans. on Soft. Eng., Spec. Issue on
  Formal Methods}} \bibinfo{volume}{16}(\bibinfo{number}{9}), pp.
  \bibinfo{pages}{1024--1043}, \doi{10.1109/2F32.58788}.

\bibitemdeclare{article}{Smi10}
\bibitem{Smi10}
\bibinfo{author}{D.~R. \surnamestart Smith\surnameend} (\bibinfo{year}{2010}):
  \emph{\bibinfo{title}{Global Search Theory Revisited}}.
\newblock {\sl \bibinfo{journal}{Unpublished}}.

\bibitemdeclare{techreport}{SPW95}
\bibitem{SPW95}
\bibinfo{author}{D.~R. \surnamestart Smith\surnameend}, \bibinfo{author}{E.~A.
  \surnamestart Parra\surnameend} \& \bibinfo{author}{S.~J. \surnamestart
  Westfold\surnameend} (\bibinfo{year}{1995}): \emph{\bibinfo{title}{Synthesis
  of high-performance transportation schedulers}}.
\newblock \bibinfo{type}{Technical Report}, \bibinfo{institution}{Kestrel
  Institute}.

\bibitemdeclare{techreport}{SW08}
\bibitem{SW08}
\bibinfo{author}{D.~R. \surnamestart Smith\surnameend} \&
  \bibinfo{author}{S.~\surnamestart Westfold\surnameend}
  (\bibinfo{year}{2008}): \emph{\bibinfo{title}{Synthesis of Propositional
  Satisfiability Solvers}}.
\newblock \bibinfo{type}{Final Proj. Report}, \bibinfo{institution}{Kestrel
  Institute}.

\bibitemdeclare{inproceedings}{STBSS06}
\bibitem{STBSS06}
\bibinfo{author}{A.~\surnamestart Solar-Lezama\surnameend},
  \bibinfo{author}{L.~\surnamestart Tancau\surnameend},
  \bibinfo{author}{R.~\surnamestart Bodik\surnameend},
  \bibinfo{author}{S.~\surnamestart Seshia\surnameend} \&
  \bibinfo{author}{V.~\surnamestart Saraswat\surnameend}
  (\bibinfo{year}{2006}): \emph{\bibinfo{title}{Combinatorial sketching for
  finite programs}}.
\newblock In: {\sl \bibinfo{booktitle}{Proc. of the 12th intl. conf. on
  architectural support for prog. lang. and operating systems (ASPLOS)}}, pp.
  \bibinfo{pages}{404--415}, \doi{10.1145/2F1168857.1168907}.

\bibitemdeclare{inproceedings}{SGF10}
\bibitem{SGF10}
\bibinfo{author}{S.~\surnamestart Srivastava\surnameend},
  \bibinfo{author}{S.~\surnamestart Gulwani\surnameend} \&
  \bibinfo{author}{J.~S. \surnamestart Foster\surnameend}
  (\bibinfo{year}{2010}): \emph{\bibinfo{title}{From program verification to
  program synthesis}}.
\newblock In: {\sl \bibinfo{booktitle}{POPL}}, pp. \bibinfo{pages}{313--326},
  \doi{10.1.1.148.395}.

\bibitemdeclare{inproceedings}{VY08}
\bibitem{VY08}
\bibinfo{author}{M.~\surnamestart Vechev\surnameend} \&
  \bibinfo{author}{E.~\surnamestart Yahav\surnameend} (\bibinfo{year}{2008}):
  \emph{\bibinfo{title}{Deriving linearizable fine-grained concurrent
  objects}}.
\newblock \bibinfo{series}{PLDI '08}, pp. \bibinfo{pages}{125--135},
  \doi{10.1145/2F1375581.1375598}.

\end{thebibliography}

\end{document}